\newcommand{\bp}{\begin{proof} \small }
\newcommand{\ep}{\end{proof} \normalsize}
\newcommand{\epx}{\end{proof} \small}
\newcommand{\bpa}{\begin{proofappx} \footnotesize }
\newcommand{\epa}{\end{proofappx} \small }
\newtheorem{theorem}{Theorem}
\newtheorem{definition}{Definition}
\newtheorem*{theorem*}{Theorem}
\newtheorem*{proposition*}{Proposition}
\newtheorem*{corollary*}{Corollary}
\newtheorem*{lemma*}{Lemma}
\newtheorem*{assumption*}{Assumption}
\newtheorem*{definition*}{Definition}
\newtheorem*{claim*}{Claim}
\newcommand{\bm}[1]{\mbox{\boldmath $#1$}}
\newcommand{\be}{\begin{equation}}
\newcommand{\ee}{\end{equation}}
\newcommand{\bs}{\begin{subequations}}
\newcommand{\es}{\end{subequations}}
\newcommand{\bq}{\begin{eqnarray}}
\newcommand{\eq}{\end{eqnarray}}
\newcommand{\bqn}{\begin{eqnarray*}}
\newcommand{\eqn}{\end{eqnarray*}}
\newcommand{\ba}{\left[ \begin{array}}
\newcommand{\ea}{\\ \end{array} \right]}
\newcommand{\ben}{\begin{enumerate}}
\newcommand{\een}{\end{enumerate}}
\def\real{{\mathchoice%
{\hbox{\rm\setbox1=\hbox{I}\copy1\kern-.45\wd1 R}}
{\hbox{\rm\setbox1=\hbox{I}\copy1\kern-.45\wd1 R}}
{\hbox{\scriptsize\rm\setbox1=\hbox{I}\copy1\kern-.45\wd1 R}}
{\hbox{\scriptsize\rm\setbox1=\hbox{I}\copy1\kern-.45\wd1 R}}}}
\def\Zint{{\mathchoice{\setbox1=\hbox{\sf Z}\copy1\kern-.75\wd1\box1}
{\setbox1=\hbox{\sf Z}\copy1\kern-.75\wd1\box1}
{\setbox1=\hbox{\scriptsize\sf Z}\copy1\kern-.75\wd1\box1}
{\setbox1=\hbox{\scriptsize\sf Z}\copy1\kern-.75\wd1\box1}}}
\newcommand{\complex}{ \hbox{\rm C\kern-0.45em\rule[.07em]{.02em}{.58em}%
\kern 0.43em}}
\begin{document}
\title{Socially Trusted Collaborative Edge Computing \\in Ultra Dense Networks}

\author{Lixing~Chen,~\IEEEmembership{Student~Member,~IEEE,}
        Jie~Xu,~\IEEEmembership{Member,~IEEE}

\thanks{L. Chen and J. Xu are with the Department of Electrical and
	Computer Engineering, University of Miami, USA. Email: lx.chen@miami.edu, jiexu@miami.edu.}
}
\maketitle

\begin{abstract}
Small cell base stations (SBSs) endowed with cloud-like computing capabilities are considered as a key enabler of edge computing (EC), which provides ultra-low latency and location-awareness for a variety of emerging mobile applications and the Internet of Things. However, due to the limited computation resources of an individual SBS, providing computation services of high quality to its users faces significant challenges when it is overloaded with an excessive amount of computation workload. In this paper, we propose collaborative edge computing among SBSs by forming SBS coalitions to share computation resources with each other, thereby accommodating more computation workload in the edge system and reducing reliance on the remote cloud. A novel SBS coalition formation algorithm is developed based on the coalitional game theory to cope with various new challenges in small-cell-based edge systems, including the co-provisioning of radio access and computing services, cooperation incentives, and potential security risks. To address these challenges, the proposed method (1) allows collaboration at both the user-SBS association stage and the SBS peer offloading stage by exploiting the ultra dense deployment of SBSs, (2) develops a payment-based incentive mechanism that implements proportionally fair utility division to form stable SBS coalitions, and (3) builds a social trust network for managing security risks among SBSs due to collaboration. Systematic simulations in practical scenarios are carried out to evaluate the efficacy and performance of the proposed method, which shows that tremendous edge computing performance improvement can be achieved.
\end{abstract}

\section{Introduction}
Pervasive mobile computing and the Internet of Things are driving the development of many new applications that are both compute-demanding and latency-sensitive, such as cognitive assistance, mobile gaming and augmented reality. Although cloud computing enables convenient access to a centralized pool of configurable and powerful computing resources, it often cannot meet the stringent requirements of latency-sensitive applications due to the often unpredictable network latency and expensive bandwidth \cite{mao2017mobile, shi2016edge}. The growing amount of distributed data further makes it impractical or resource-prohibitive to transport all the data over today's already-congested backbone networks to the remote cloud \cite{rivera2014gartner}. As a remedy to these limitations, Edge Computing (EC) \cite{lopez2015edge} emerges as a new computing paradigm to push the frontier of computing applications, data, and services away from centralized cloud computing infrastructures to the logical extremes of a network thereby enabling analytics and knowledge generation to occur at the data source.

\begin{figure}[t]
	\centering	
	\includegraphics[width=3.5 in]{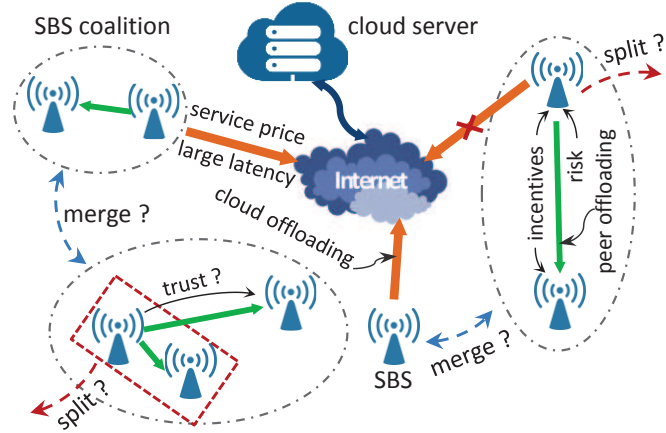}
	\caption{Illustration of socially trusted collaborative EC}
	\label{scenario}
	\vspace{-0.2 in}
\end{figure}

Considered as a key enabler of EC, small cell base stations (SBSs), such as femtocells and picocells, endowed with cloud-like computing and storage capability can serve end-users' computation requests as a substitute of the cloud while providing LTE connectivity to the Internet, especially for indoor premises. Such migration between computation and wireless communication service provisioning at the network edge was envisioned in the TROPIC project \cite{TROPIC}. Nonetheless, compared with mega-scale data centers, SBS will be limited in computing resources. Therefore, computation workload exceeding the SBS's computing capacity still has to be sent to the cloud, resulting in a hierarchical offloading structure among end-users, SBSs, and the cloud. Although there have been a few works \cite{xu2016online, zhao2015cooperative} studying the optimization of the offloading strategy in this hierarchical structure, relying on a single SBS significantly limits the EC performance. Fortunately, the ultra dense deployment of SBSs in the next generation (5G) mobile networks \cite{andrews2014will} creates an opportunity for nearby SBSs to collaboratively form a computation resource pool. Such a collaborative EC infrastructure, also known as Femto-Cloud \cite{tanzil2016distributed}, improves the efficiency of system resource utilization by exploiting the spatial diversity of workload patterns, thereby significantly enhancing EC performance. For instance, a cluster of SBSs can coordinate among themselves to serve computation requests by transferring workload from overloaded SBSs to nearby SBSs with a light workload.

The idea of balancing computation workload among nearby SBSs is similar to geographical load balancing \cite{lin2012online, lou2015spatio} in data center networks, which has been extensively studied. However, edge computing in ultra dense networks faces many new challenges of forming an effective collaborative network. First, whereas conventional clouds manage only the computing resource, moving the computing resource to the network edge leads to the co-provisioning of radio access and computing services by the SBSs, thus mandating a new model for understanding the interdependency between the management of the two resources. Unlike cloud computing that is agnostic to user location, the physical association between the SBSs and the end users becomes a critical design aspect that has a significant impact on the computing performance. Second, a distinct feature of SBSs is that they are often owned and deployed by individual users (e.g. home/enterprise owners). Unlike macro base stations that are deployed by the network operator, the operator has only minimum control over SBSs. Without proper incentives, SBSs will be reluctant to participate in the collaborative edge computing process. Therefore, incentives must be devised and incorporated into the load balancing scheme. Of particular importance is the design of a trust management module that takes into account the social trust relationship between the SBSs to minimize the security and privacy risks in collaboration.

In this paper, we design a socially trusted collaborative edge computing platform for ultra dense networks (see Figure \ref{scenario} for illustration). (1) We use payment as the incentive mechanism for individual SBSs to collaborate. Specifically, overloaded SBSs can pay nearby SBSs with spare computing resources to process their workload instead of offloading it to the remote cloud.  The collaborative network formation and the associated payment scheme are designed under the coalitional game theoretic framework and we prove that our proposed scheme results in the optimal stable coalition among the SBSs.  (2) When forming the coalition among SBSs, in addition to workload balancing at the SBS level, we allow end-users to directly switch their association to neighboring SBSs by exploiting the ultra dense deployment of SBSs, thereby avoiding transmission energy and latency incurred in the workload transfer between SBSs. This mixed workload balancing scheme is in stark contrast with geographical load balancing in data center networks or collaborative Femto-Cloud \cite{tanzil2016distributed} that does not consider the ultra dense deployment. (3) The proposed platform has a dedicated trust management component that manages the trust between any two SBSs to enable security-aware collaboration. Apart from the physical network of SBSs, we construct a social trust network that characterizes the trust between SBSs, which can be used to determine the security measure to be put on the processing/offloading of computation workload from/to neighbor SBSs, thereby reducing security and privacy leakage risks. (4) We conduct extensive systematic simulation studies to evaluate our proposed scheme in practical settings and understand how different SBS coalitions are formed and when collaborative EC is the most useful. Our results show that the proposed collaborative EC method can reduce the system cost by more than 40\% compared to standalone EC systems without collaboration.

The rest of this paper is organized as follows. Section \ref{sec_sys_model} presents the system model. Section \ref{sec_coalition_game} develops a coalitional game for cooperative SBS network. Section \ref{sec_CF} proposes a merge-split framework for distributed coalition formation. Simulations are carried out in Section \ref{sec_simulation}. Section \ref{sec_related_work} reviews related works, followed by the conclusion in Section \ref{sec_conclusion}.

\section{System Model}\label{sec_sys_model}
We consider $N$ SBSs, indexed by $\mathcal{N}=\{1,2,...,N\}$, endowed with heterogeneous computing capabilities. SBSs are located in separate rooms, possibly on different floors, in a multi-story building. These SBSs have Internet access and therefore can offload computation tasks to the remote cloud when its own computational capacity cannot accommodate the demand. Let $\mathcal{M} = \{1,2,...,M\}$ denote the set of all mobile user equipments (MUEs) in the building. Each SBS has a set of authorized MUEs, denoted by $\mathcal{M}_i\subseteq \mathcal{M}$. For instance, MUEs (e.g. mobile phones, laptops etc.) of employees in a business are authorized to access the communication/computing service of the SBS deployed by the business. MUEs of family members can also access its home SBS. However, due to the ultra dense deployment of SBSs in the building, an MUE is in the radio coverage of multiple SBSs and hence, there is a potential for SBSs to collaboratively balance the workload via direct MUE-SBS association.

We consider a time-slotted system. In each time slot, each MUE has a certain amount of computationally intensive tasks that need be offloaded to either the SBSs or the cloud for processing. We will focus on the problem in one time slot. There are two types of computation tasks: private tasks and normal tasks. Private tasks of an MUE must be processed by its own SBS or the cloud (which is assumed to be secure). They cannot be processed by other SBSs due to privacy concerns since SBSs are deployed by individual home/business owners which may not be fully trusted and less protected than the cloud. Normal tasks are less security-sensitive and hence, they can be processed by either its own SBS, the secure cloud or other nearby SBSs. Therefore, the task requests from MUE $m\in\mathcal{M}$ in a time slot are described by a tuple $(\lambda^a_m,\tau_m)$ where $\lambda^a_m$ is rate of task arrival in the current time slot (assuming a Poisson arrival process) and $\tau_m\in[0,1]$ is the fraction of private tasks. Without SBS collaboration, the total task arrival rate to SBS $i$ from all its authorized MUEs is thus:
\begin{align}
	\lambda^s_i=\sum_{m\in\mathcal{M}_i}\lambda^a_m
\end{align}

Due to the limited computational capacity of SBS $i$, there is a maximum task arrival rate $\omega^{\max}_i$ that SBS $i$ can handle. We define $\alpha_i \triangleq\omega^{\max}_i-\lambda^s_i$ as the computing resource surplus (or deficit) of SBS $i$. If $\alpha_i \geq 0$, then SBS $i$ has spare computing resources that it can share with other SBSs. If $\alpha_i < 0$, then SBS $i$ needs to acquire additional computing resource from either the cloud or other peer SBSs to meet its demand. Therefore, there is a potential computing resource exchange market among the SBSs. We call SBSs with $\alpha \geq 0$ potential ``sellers'' and those with $\alpha < 0$ potential ``buyers''.

\subsection{Overview of Collaborative Edge Computing}
Buyer SBSs have the following two options to acquire additional computing resources:

\textbf{SBS-to-Cloud offloading}: The SBS can further offload the unsatisfied computation tasks to the remote cloud. However, there will be extra transmission delay costs due to the large round-trip time to the remote cloud. Moreover, the cloud will also charge the SBS service fees for using the cloud computing resources.

\textbf{SBS coalitions}: Instead of relying on the remote cloud, SBSs can collaborate with each other by forming edge computing coalitions, in order to reduce the number of computation tasks offloaded to the cloud, thereby improving the delay performance and cutting their expenses on using the cloud service. Specifically, there are two ways to collaborate:
\begin{itemize}
	\item \emph{SBS peer offloading}: a buyer SBS first receives the computation tasks from its authorized MUEs and then further transmits via the wireless link some of the received tasks to nearby seller SBSs for processing.
	\item \emph{MUE-SBS association}: if an authorized MUE is also in the coverage of the seller SBS, a buyer SBS can associate this MUE to the seller SBS and directly offload its computation tasks to that seller SBS. This method exploits the dense deployment of SBSs and further saves the task transmission delay and energy cost.
\end{itemize}

Clearly, load balancing via MUE-SBS association is preferred due to the reduced overhead cost. Therefore, in SBS coalitions, MUE-SBS association is adopted with a higher priority. When load balancing cannot be realized via MUE-SBS association, SBS peer offloading is then executed. Typically, these two methods are used at the same time since MUEs are distributed in the network and not all MUEs are covered by seller SBSs.

\begin{figure}[htb]
	\centering	
	\includegraphics[width=3.5 in]{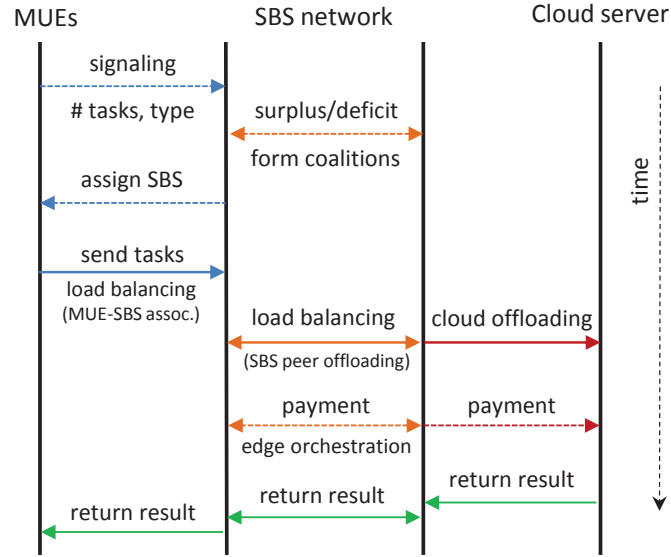}
	\caption{Operating time sequence diagram of collaborative edge computing}
	\label{operating_time_sequence}
\end{figure}

Different coalitions (i.e. seller-buyer matchings) lead to different edge computing performance. Since SBSs are self-interested and do not provide computing services to other SBSs for free, a key issue that this paper will address is how to design payment mechanisms to provide seller SBSs with incentives to cooperate and form the optimal stable coalitions. Before we design the coalition mechanism, we first show a time sequence diagram in Figure \ref{operating_time_sequence} to illustrate the operation of the collaborative edge computing system. At the beginning of each operational slot, a signaling phase is introduced in which MUEs report task requests. Based on these requests, SBSs form edge computing coalitions by playing a distributed coalitional game, taking into consideration of the local processing cost, cloud offloading cost, SBS peer offloading cost and collaboration risk. Buyer SBSs make payments to seller SBSs for using their computing services. An edge orchestrator, which is a trusted third party, is introduced to facilitate the payment process. Specifically, the buyer SBSs first send payments to the edge orchestrator, which then distributes the payments to the seller SBSs. The time slot ends with returning computation results to the MUEs.

Next, we model the various costs involved in the collaborative edge computing system. For each SBS, its total cost comprises two main parts: operational cost and risk management cost. These costs form the basis of the coalition formation game.

\subsection{Operational Cost}
We first model the operational cost of the edge computing coalitions incurred in different stages of the system.

\textbf{Stage I: MUE-SBS association}. We consider that an SBS absorbs the association cost (i.e. MUE-to-SBS transmission delay and energy consumption) of its authorized MUEs as part of its operational cost. Let $\mathcal{M}_{ij}\subseteq\mathcal{M}_{i} (\forall j\in\mathcal{N}, j\neq i$) denote the set of authorized MUEs of SBS $i$ that are associated to SBS $j$. The achievable transmission rate $r^a_{mi}$ between MUE $m$ and SBS $i$ is given by the Shannon capacity
\begin{align}
r^a_{mi} = W\log\left(1 + \frac{p^a_{mi} H_{mi}}{\sigma^2 + I}\right)
\end{align}
where $W$ is the channel bandwidth, $H_{mi}$  is the channel gain between SBS $i$ and user $m$, $\sigma^2$ is the noise power and $I$ is the interference from other SBSs. Given a target transmission rate $r^a_{mi}$, the transmission power is thus
\begin{align}\label{shannon}
p^a_{mi}=(2^{\frac{r^a_{mi}}{W}}-1)(\sigma^2 + I)H_{mi}^{-1}
\end{align}
To simplify the notations, we assume that the expected data size of each task is a unit size. Therefore, the expected transmission delay and energy consumption of each task are $d^a_{mi} = 1/r^a_{mi}$ and $e^a_{mi} = p^a_{mi}/r^a_{mi}$, respectively. The MUE-SBS association cost of SBS $i$ is therefore
\begin{align}
C^{a}_i = &\sum_{m\in M_{ii}}\lambda^a_m(d^a_{mi}+\gamma e^a_{mi})\nonumber\\
&+\sum_{j\in\mathcal{N}}\sum_{m\in\mathcal{M}_{ij}}(1-\tau_m)\lambda^a_m (d^a_{mj}+\gamma e^a_{mj})\\
&+\sum_{j\in\mathcal{N}}\sum_{m\in\mathcal{M}_{ij}}\tau_m\lambda^a_m (d^a_{mi}+\gamma e^a_{mi}) \nonumber
\end{align}
where $\gamma$ is a normalization coefficient for delay cost and energy cost. The first term on the right-hand side is the association cost of authorized MUEs of SBS $i$ that only associate with SBS $i$. The second and third terms are the costs due to authorized MUEs of SBS $i$ associating with other SBSs $j\neq i$. However, for each MUE $m$, its private tasks must be sent to its own SBS $i$. Note that for a seller SBS $i$, we must have $\mathcal{M}_{ij} = \emptyset, \forall j$ since it has enough computing resources to accommodate all task requests from its own authorized MUEs.

\textbf{Stage II: SBS peer offloading}. In this stage, the cost of SBSs is mainly caused by task migration between SBSs due to transmission delay and energy consumption. Let $\beta_{ij}$ denote the amount of tasks sent from SBS $i$ to SBS $j$. The transmission rate from SBS $i$ to SBS $j$ and the associated transmission power are denoted by $r_{ij}$ and $p_{ij}$, respectively, which can be derived in a similar way as in Stage I. Therefore, the transmission delay and energy consumption for each task is $d^{s,tx}_{ij}=1/r_{ij}$ and  $e^{s,tx}_{ij}=p_{ij}/r_{ij}$. The transmission cost incurred to SBS $i$ due to peer offloading is therefore:
\begin{align}\label{cost_ij}
C^{s,tx}_{i}=\sum_{j\in\mathcal{N},j\neq i}\beta_{ij}\left(d^{tx}_{ij}+\gamma e^{tx}_{ij}\right)
\end{align}

\textbf{Stage III: SBS computing}. We then model the cost for processing computation tasks locally at SBSs. For each task, we assume that the required number of CPU cycles is an exponential random variable with mean $\rho$. The computational capability of SBS $i$ is measured by its CPU speed (i.e. CPU cycles per second), denoted by $f_i$. We model the computing delay using the M/M/1 queuing system. Thus the average computation delay (including task waiting time and processing time) for each task, can be obtained as
\begin{align}
d^{s,c}_i (\omega_i)= \frac{1}{f_i/\rho - \omega_i}
\end{align}
where $\omega_i$ is the workload processed at SBS $i$, which is the outcome of SBS coalition and load balancing and will be discussed shortly. The computation energy consumption for each task processed at SBS $i$ is proportional to the square of the CPU speed $(f_i)^2$, presented as $e^{s,c}_i= \kappa(f_i)^2$, where $\kappa$ is a constant depending on the CPU architecture \cite{mao2017mobile}. Therefore, the computation cost of SBS $i$ is
\begin{align}
	C^{s,c}_i=\omega_i(d^{s,c}_i(\omega_i)+\gamma e^{s,c}_i)
\end{align}
Notice that although here we use specific functions for computing delay and energy consumption, other functions can also be adopted. In practice, an SBS may maintain a lookup table for the expected delay and energy consumption under different workload inputs.

\textbf{Stage IV: SBS-to-Cloud offloading}. SBSs may still have to offload some computation tasks to the cloud. Let $\beta_{i0}$ be the number of computation tasks offloaded to the cloud by SBS $i$. Due to the large round-trip time to the remote cloud and the transmission energy consumption, the SBS-to-Cloud offloading cost of SBS $i$ is
\begin{align}\label{cost_i0}
	C^{c,tx}_{i}(\beta_{i0})=\beta_{i0}(d_{i0}^{c,tx}+\gamma e^{c,tx}_{i0})
\end{align}
where $d_{i0}^{c,tx}$ is the total expected delay (due to both transmission and computation on the cloud) from SBS $i$ to the cloud, and $e^{c,tx}_{i0}$ is the transmission energy consumed by SBS $i$ for each task.  In addition, the cloud charges SBSs $w_0$ \$/task for using the cloud service. Therefore, SBS $i$ will also incur an monetary cost
\begin{equation}
	M_i=w_0\beta_{i0}
\end{equation}

To sum up, the total operational cost of SBS $i$ is:
\begin{align}\label{offloading_cost}
	C_i=w_c(C^a_i+C^{s,tx}_i+C^{s,c}_i+C^{c,tx}_i) + M_i
\end{align}
where $w_c$ converts the delay and energy cost into a value comparable to the monetary cost. Keep in mind that this cost depends on how the SBSs form the coalition and balance the workload among themselves.

\subsection{SBS Social Trust Model}
Since SBSs are operated by individual owners, there are higher security and privacy risks for SBSs to offload their tasks to other SBSs than processing locally or offloading to the secure cloud. Therefore, when forming SBS coalitions, trust between SBSs must be taken into account.

Apart from the physical network of SBSs, we define a social trust network that describes the trust relationships between SBSs. Let $T_{ij} \in [0, 1]$ denote the trust value that SBS $i$ assigns to SBS $j$. $T_{ij} = 0$ indicates that SBS $i$ completely distrusts SBS $j$ and $T_{ij} = 1$ indicates that SBS $i$ completely trusts SBS $j$. However, even if two SBSs are neighbors in the physical network, they may not have an established trust relationship between each other.  For instance, a new SBS may have just been set up or two SBSs have not interacted with each other for a long time. As illustrated in Figure \ref{physical_social_network}, although SBS $i$ and SBS $j$ are physical neighbors and hence can potentially form a coalition, the values of $T_{ij}$ and $T_{ji}$ are unknown since they do not have a social relationship.

\begin{figure}[htb]
	\centering	
	\subfigure[Physical network]{\label{physical_network}
		\includegraphics[width=2 in]{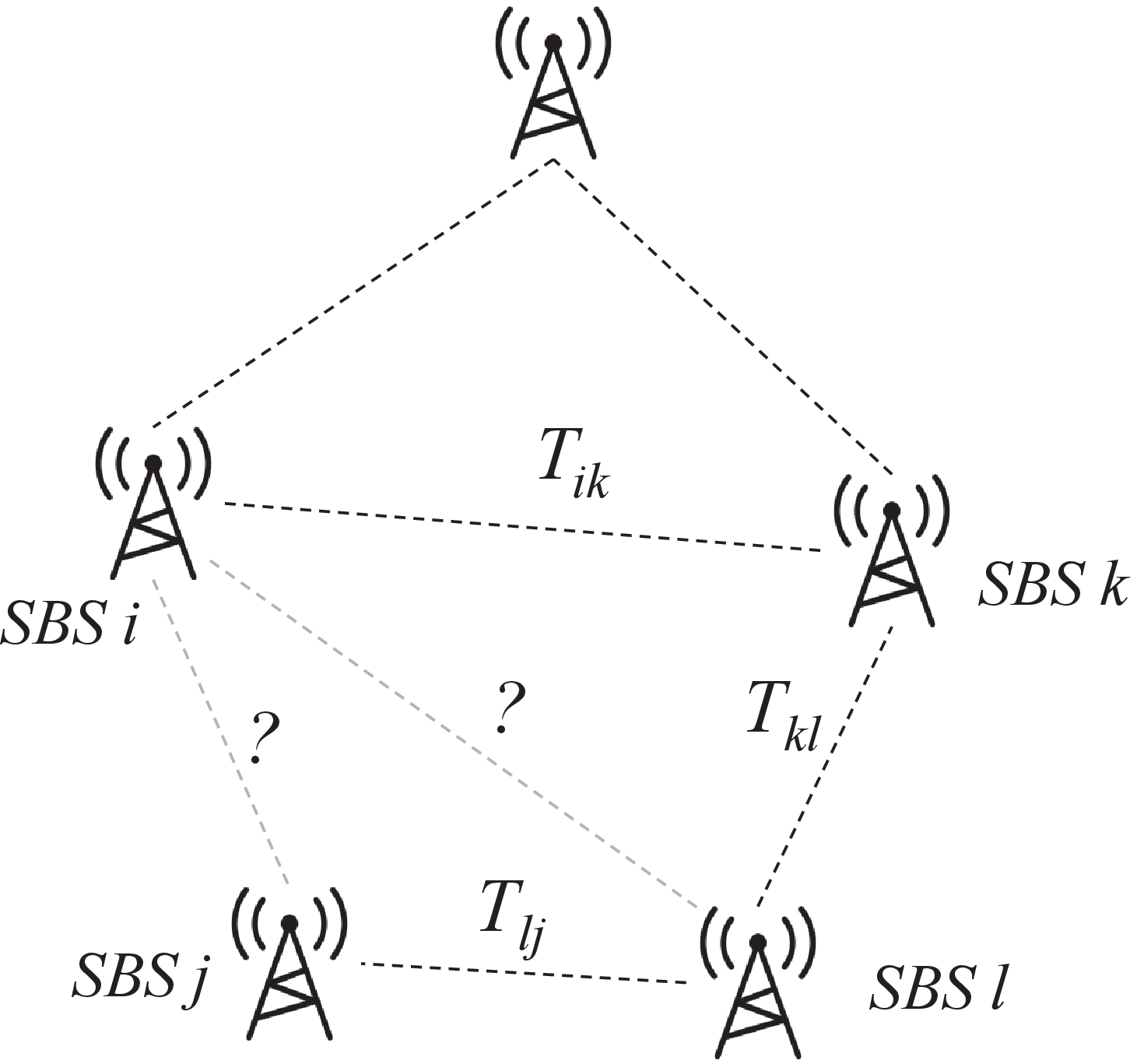}}
    \hspace{0.15in}
	\subfigure[Social trust network]{\label{social_network}
		\includegraphics[width=2 in]{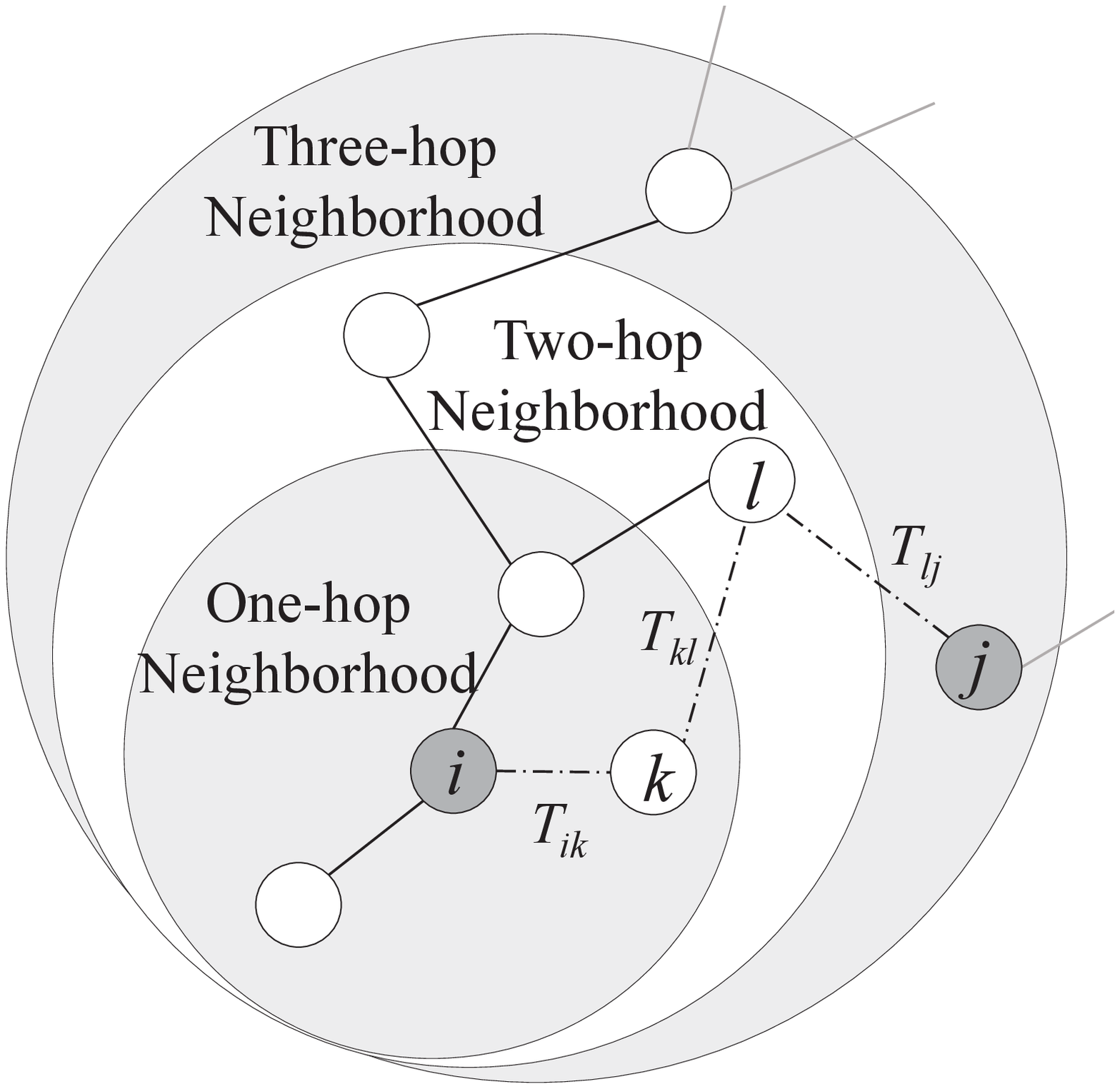}}
	\caption{Physical and social network.}
	\label{physical_social_network}
\end{figure}

In such cases, trust between the SBSs will be derived using the social trust network. If SBS $i$ can reach SBS $j$ via certain other SBSs in the social network, then the trust between SBS $i$ and SBS $j$ is computed by propagating trust along the path that connects them. Let $\Lambda(i,j)$ be the shortest path between SBS $i$ and SBS $j$, then trust $T_{ij}$ is computed as
\begin{align}
T_{ij} = \prod_{(k,l)\in \Lambda(i,j)}T_{kl}
\end{align}
We note that the above is just one common way to evaluate trust using the social trust network.  There are many other ways to evaluate trust proposed in the literature. Moreover, the trust values will be updated over time depending on the recent social interactions (including interactions in the coalitions) among SBSs.

Using the social trust network, the cost due to managing the security risk that SBS $i$ faces when offloading computation tasks to other SBSs (e.g. adopting stronger yet more costly security mechanisms) is modeled as follows:
\begin{equation}\label{offloading_risk}	R_i=w_r \sum_{j\in\mathcal{N}}(\beta^u_{ij}+\beta_{ij})(1-T_{ij})
\end{equation}
where $\beta^u_{ij}=\sum_{m\in\mathcal{M}_{ij}}(1-\tau_m)\lambda^a_m$ is the amount of offloaded workload due to direct MUE-SBS association, $\beta_{ij}$ is the amount of offloaded workload in SBS peer offloading and $w_r$ converts risk into a value comparable to the monetary cost.

\section{Collaborative SBS network as a Coalitional Game}\label{sec_coalition_game}
To formally study the formation of SBS coalitions and the resulting workload offloading decisions, we use the framework of the coalitional game theory \cite{bacsar1998dynamic,apt2009generic}. A coalitional game is defined as a tuple $(\mathcal{N},v)$ where $\mathcal{N}$ is the player's set and $v:2^{\mathcal{N}}\to \mathbb{R}$ is a function that assigns for every possible coalition $S\subseteq\mathcal{N}$ a real number representing the total benefit achieved by coalition $S$. By evaluating the values of different coalitions, players decide what coalitions are formed among themselves. In what follows, we first define the value function $v(S)$ for any given coalition $S\subseteq\mathcal{N}$. Clearly, $v(S)$ depends on not only which SBSs are in the coalition $S$ but also how they collaborate, namely how they perform load balancing. Then we will describe what coalitions are desired in terms of stability.
\subsection{Collaborative load balancing and value function for coalitions}
In this subsection, we investigate the interaction among SBSs that belong to any given coalition $S$ (which may not be stable though). Let $S_s\subset S$ denote the set of seller SBSs in $S$ and $S_b\subset S$ denote the set of buyer SBSs in $S$. We must have $S_s\cup S_b = S$. Although there are many approaches to match buyers with sellers in the coalition (e.g., double auction or other matching algorithms \cite{saad2009coalitional}), for ease of implementation, we consider a simple scheme similar to \cite{saad2011coalotional} that relies on the preference of the buyers inside the coalition to decide the workload offloading decisions among the SBSs. Specifically, buyers can act sequentially (according to some order $\Pi$) to acquire their needed computation resource as follows:

\begin{itemize}
\item[(1)] Buyer $b_i\in S_b$ requests to acquire its needed computation resource from seller $s_j\in S_s$ that will potentially yield the smallest offloading cost according to the SBS peer offloading scheme in Algorithm 1 (which returns results of $\mathcal{M}_{b_is_j}$ and $\beta_{b_is_j}$).
\begin{itemize}
	\item If seller $s_j$ can offer the required computation resource of buyer $b_i$, then the buyer does not act further.
	\item Otherwise, buyer $b_i$ acquires as much computation resource as possible from seller $s_j$  and then tries to fulfill the rest of its computation resource demand by acquiring resources from other sellers in $S_s$.
\end{itemize}
\item[(2)] Buyer $b_i$ repeats the above sequence until it has covered up all its computation resource deficit or no available sellers in $S_s$ exist. Then the next buyer SBS starts its acquiring process.
\end{itemize}

This process is repeated for all buyers in $S_b$. If a buyer is unable to find a seller in $S_s$ and still needs computation resource, then the buyer will offload the remaining computation tasks, represented by $\beta_{b_i 0}$ to the cloud. Essentially a buyer SBS tries to offload as much computation workload as possible to peer seller SBSs.

\begin{algorithm}[htb]
	\caption{SBS peer offloading scheme}
	\begin{algorithmic}[1]
		\Statex \textbf{Input}: Computation surplus $|\alpha_{s_j}|$ of seller $s_j$; Computation deficit $|\alpha_{b_i}|$ of buyer $b_i$.
		\State Maximum task offloading:  $\alpha=\min(|\alpha_{b_i}|, |\alpha_{s_j}|)$;
		\Statex \textbf{Stage 1}: \emph{MUE-SBS association}
		\State Find users co-covered by buyer $b_i$ and seller $s_j$, denoted by $\mathcal{R}_{b_is_j}$;
		\If {$\sum_{m\in\mathcal{R}_{b_is_j}}(1-\tau_m)\lambda^a_m\geq \alpha$}
		\State Choose $\mathcal{M}_{b_is_j}$ satisfying:
		\Statex $\qquad\qquad\sum_{m\in\mathcal{M}_{b_is_j}}(1-\tau_m)\lambda^a_m = \alpha$; \textbf{Stop};
		\Else
		\State $\mathcal{M}_{b_is_j}=\mathcal{R}_{b_is_j}$;
		\State $\tilde{\alpha}=\alpha-\sum_{m\in\mathcal{M}_{b_is_j}}(1-\tau_m)\lambda^a_m$; \textbf{Go to Stage 2};
		\EndIf
		\Statex \textbf{Stage 2}: \emph{SBS peer offloading}
		\State $\beta_{b_is_j}=\tilde{\alpha}$;
		\State \textbf{return} $\mathcal{M}_{b_is_j}$, $\beta_{b_is_j}$;
	\end{algorithmic}
\end{algorithm}

Following the above buyer-seller matching and workload allocation process, the values of $\mathcal{M}_{b_is_j}$, $\beta_{b_is_j}$ and $\beta_{b_i0}$ can thus be determined for each $b_i$ and $s_j$. In particular, the workload $\omega_{b_i}$ (or $\omega_{s_j}$) that buyer SBS $b_i$ (or seller SBS $s_j$) needs to process locally can be determined as follows:
\begin{align}
	\omega_{b_i}=\lambda^s_{b_i}-\sum_{s_j\in S_s}\sum_{m\in\mathcal{M}_{b_is_j}}(1-\tau_m)\lambda^a_m-\sum_{s_j\in \{S_s\cup\{0\}\}}\beta_{b_is_j}
\end{align}
\begin{align}
	\omega_{s_j}=\lambda^s_{s_j}+\sum_{b_i\in S_b}\sum_{m\in\mathcal{M}_{b_is_j}}(1-\tau_m)\lambda^a_m+\sum_{b_i\in S_b}\beta_{b_is_j}
\end{align}

With all these values derived, we are able to compute the operational cost $C_i$ as well as the risk management cost $R_i$ for each SBS $i \in S$ according to our system model. Clearly, the values of these costs also depend on the ordering of the buyer SBSs in the aforementioned matching process. Let $\mathcal{O}_S$ be the set of all possible orderings over buyers in $S$. Then given an ordering $\Pi\in\mathcal{O}_S$, the utility of SBS $i$ in coalition $S$ is thus defined as
\begin{align}\label{u_S_Pi}
	u_i(S, \Pi)=- (C_i + R_i), i\in S
\end{align}
and the total utility for coalition $S$ is
\begin{align}\label{u_S_Pi}
	U(S,\Pi)=\sum_{i\in S}u_i(S, \Pi)
\end{align}
The minus sign is inserted to turn the problem into a maximization problem in order to facilitate the analysis of the coalitional game. The value function for the SBSs coalition formation game $(\mathcal{N},v)$ is defined as
\begin{align}\label{value_function}
	v(S)=\max_{\Pi\in\mathcal{O}_S}U(S,\Pi)
\end{align}
which is the maximum achievable total utility over all possible orderings of the buyers, or equivalently, the minimum achievable total cost of SBSs in the coalition.

\subsection{Payment scheme within a coalition}
The previous subsection describes how SBSs in a coalition can collaborate with each other to improve their total utility (i.e. reduce their total cost). However, although the overall performance of the coalition may be improved, individual SBSs, especially the seller SBSs, do not share their computing resources with others for free. In this subsection, we design a payment scheme to provide seller SBSs with incentives to cooperate. In the next section, we will study what stable coalitions are formed under this payment-based incentive mechanism.

Let $g_i$ be the payment/reward of a buyer/seller SBS $i$, then its post-payment utility $\phi_i$ becomes
\begin{align}
\phi_i = u_i - g_i
\end{align}
Clearly, the total payment must  equal the total reward within a coalition and hence $\sum_{i \in S} g_i = 0$. If $g_i > 0$, then SBS $i$ pays $g_i$. If $g_i < 0$, then SBS $i$ receives $|g_i|$ reward.

Our payment scheme is developed by following a simple yet strict fairness criterion, namely \textit{proportional fairness payoff division}. Nevertheless, other fairness criteria, such as egalitarian fair, Shapley value, nucleolus, can also be adopted. In this scheme, the values of payments and rewards are decided by dividing the payoff (the utility improvement) of the whole coalition due to cooperation among the SBSs proportionally to their utility achieved without cooperation. Specifically, for SBS $i$, its post-payment utility will be
\begin{align}\label{get_phi}
	\phi_i=\psi_i\left(v(S)-\sum_{j\in S}v(\{j\})\right)+v(\{i\})
\end{align}
where $v(\{i\})$ represents the utility of SBS $i$ if it does not join any coalition (so only local processing or offloading to the cloud), and $\psi_i$ is the proportional weight satisfying $\sum_{i\in S}\psi_i=1$ and
\begin{equation}
	 \dfrac{\psi_i}{\psi_j}=\dfrac{\tilde{v}(\{i\})}{\tilde{v}(\{j\})}
\end{equation}
where $\tilde{v}(\{i\})$ is the normalized utility of SBS $i$ within its coalition. The normalization is to map negative SBS utilities to a positive interval. It is easy to verify that $\sum_{i \in S}\phi_i=v(S)$. Based on the proportional fairness criterion, the payment/reward of SBS $i$  can thus be determined as
\begin{align}
	g_i(S)=\phi_i(S)-u_i(S), \forall i\in S
\end{align}
In the above equation, $\phi_i$ can be interpreted as the expected utility of SBS $i$ by participating in the coalition, while the $u_i$ is the actually realized utility. The gap of the two is filled by the payment scheme.

There are two implementation issues for the payment scheme. First, payments need to be properly distributed since multiple buyers and sellers may be involved in the transaction. Moreover, direct payment from buyers to sellers faces fraud risks in the monetary transaction. To enable the effective and safe transaction, the edge orchestrator, which is a trusted third-party, collects payments from all buyers and then distributes them to the sellers.

\subsection{Stability of Coalitions}
SBSs may form multiple disjoint coalitions and there are many ways that SBSs form coalitions. However, we are interested in forming \textit{stable} coalitions such that no SBS or group of SBSs have incentives to leave the current coalition to form a different coalition. In particular, the requirement that all SBSs (buyers and sellers) in a coalition must at least receive higher utilities than working individually is a necessary but not sufficient condition for stability.

Consider any subset $\mathcal{K} \subseteq \mathcal{N}$ of SBSs, we call $\mathcal{S} = \{S_1, ..., S_L\}$ a collection of coalitions formed by these SBSs, where $S_l \subseteq \mathcal{K},\forall l$ are disjoint subsets of $\mathcal{K}$. If $\mathcal{K} = \mathcal{N}$, then we call $\mathcal{S}$ a partition of $\mathcal{N}$. A defection function $\mathbb{D}$ is a function that associates each possible partition $\mathcal{S}$ of $\mathcal{N}$ with a group of collections. The stability of a partition $\mathcal{S}$ is defined with respect to a defection function.
\begin{definition}($\mathbb{D}$-\textbf{stability}).
A partition $\mathcal{S}$ of $\mathcal{N}$ is $\mathbb{D}$-stable if no group of SBSs are interested in leaving $\mathcal{S}$ and forming a new collection of coalitions $\mathcal{S}'\in\mathbb{D}(\mathcal{S})$. That is, at least one SBS in such a group does not improve its utility by leaving the current partition.
\end{definition}
In other words, a defection function $\mathbb{D}$ restricts the possible ways that SBSs may deviate/defect. Two defection functions are of particular interest. The first function, denoted by $\mathbb{D}_c$, associates with each partition $\mathcal{S}$ the group of all possible collections in $\mathcal{N}$, namely there is no restriction on the way SBSs may deviate. The second function, denoted by $\mathbb{D}_{hp}$, associates each partition $\mathcal{S}$ with the group of collections that can be formed by merging or splitting coalitions in $\mathcal{S}$. Therefore, $\mathbb{D}_{hp}$-stability is weaker than $\mathbb{D}_{c}$-stability. In the next section, we design a distributed SBS coalition formation algorithm that achieves at least $\mathbb{D}_{hp}$-stability.

%


\section{Distributed Coalition Formation}\label{sec_CF}
\subsection{Distributed Coalition Formation Algorithm}
To present the distributed SBS coalition formation algorithm, we first introduce the notion of Pareto dominance to compare the ``quality'' of two collections of coalitions.
\begin{definition} (\textbf{Pareto Dominance})
Consider two collections of disjoint coalitions $\mathcal{S}_1$ and $\mathcal{S}_2$ formed by the same subset of SBSs $\mathcal{K} \subseteq \mathcal{N}$. $\mathcal{S}_1$ Pareto-dominates $\mathcal{S}_2$, denoted by $\mathcal{S}_1\rhd\mathcal{S}_2$, if and only if $\phi_i(\mathcal{S}_1) \geq \phi_i(\mathcal{S}_2), \forall i \in \mathcal{K}$ with at least one strict inequality for some SBS.
\end{definition}
Pareto dominance implies that a group of SBSs prefer to form coalitions in $\mathcal{S}_1$ rather than $\mathcal{S}_2$, if and only if at least one SBS is able to strictly improve its utility without hurting any other SBS. The following two operations, namely \emph{merge} and \emph{split} \cite{apt2009generic}, are central to our coalition formation algorithm:
\begin{itemize}
	\item \textbf{Merge}: merge a set of coalitions $\{S_1,\dots,S_l\}$ into a bigger coalition $\bigcup^{l}_{j=1}S_j$ if  $\{\bigcup^{l}_{j=1}S_j\}\rhd\{S_1,\dots,S_l\}$.
	\item \textbf{Split}: split a coalition $\{\bigcup^{l}_{j=1}S_j\}$ into a set of smaller coalitions $\{S_1,\dots,S_l\}$ if $\{S_1,\dots,S_l\}\rhd\{\bigcup^{l}_{j=1}S_j\}$.
\end{itemize}

By performing Merge, a group of SBSs can operate and form a single and larger coalition if this formation increases the utility of at least one SBS without decreasing the utility of any other involved SBSs. Hence, a Merge decision ensures that all involved SBSs agree on its occurrence. Likewise, a coalition can decide to split and divide itself into smaller coalitions if splitting is preferred in the Pareto sense.

\begin{algorithm}[htb]
	\caption{Distributed SBS coalition formation}
	\begin{algorithmic}[1]
		\State \textbf{Initial}: The SBS network is partitioned by $\mathcal{S}=\mathcal{N}=\{1,\dots,N\}$ with non-cooperative SBSs at the beginning of each operational time slot.
		\Statex \emph{Phase 1: SBS Coalition Formation}
		\State \textbf{Repeat}
		\State \quad (a) $\mathcal{S}^{\prime}\leftarrow$ Merge($\mathcal{S}$): SBS coalitions in $\mathcal{S}$ decide to merge by examining the Pareto dominance.
		\State \quad (b) $\mathcal{S}\leftarrow$ Split($\mathcal{S}^{\prime}$): SBS coalitions in $\mathcal{S}^{\prime}$ make distributed split decision using the Pareto dominance.
		\State \textbf{Until} Merge and split converges to a final partition $\mathcal{S}_f$
		\Statex \emph{Phase 2: Cooperative computation offloading}
		\State \quad (a) each coalition $S_i\in\mathcal{S}_f$ order its buyers in a way to minimize the offloading cost (maximize \eqref{value_function}).
		\State \textbf{Repeat} for every $S_i\in \mathcal{S}_f$
		\State \quad (b) each buyer in a coalition $S_i\in\mathcal{S}_f$ attempts to acquire computation demands in coalition $S_i$.
		\State \textbf{Until} no peer offloading in the coalition is available.
		\State \quad (c) any buyer who still has unsatisfied computation demand performs SBS-to-cloud offloading.
        \State These two stages are repeated periodically to adapt the partition to environmental changes.
	\end{algorithmic}
\end{algorithm}

Our SBS coalition formation algorithm is developed based on the Merge and Split operations, which is presented in Algorithm 2. The algorithm consists of two phases. The coalition formation phase iteratively executes the Merge and Split operations. Given the current partition $\mathcal{S}$, each coalition $S \in \mathcal{S}$ negotiates, in a pairwise manner, with neighboring SBSs to assess a potential merge. The two coalitions will then decide whether or not to merge. Whenever a Merge decision occurs, a coalition can subsequently investigate the possibility of a Split. Clearly, a Merge or a Split operation is a distributed decision that an SBS (or a coalition of SBSs) can make. After successive Merge-and-Split iterations, the network converges to a partition composed of disjoint coalitions and no coalition has any incentive to further merge or split. In other words, the partition is \emph{Merge-and-Split proof}. The convergence of any Merge-and-Split iterations such as the proposed algorithm is guaranteed as shown in \cite{apt2009generic}. Upon convergence, the second phase of the actual computation offloading then starts using mechanisms described in Section \ref{sec_coalition_game}.

\subsection{Stability Analysis}
The outcome of the above algorithm is a partition of disjoint independent coalitions of SBSs. As an immediate result of the definition of $\mathbb{D}_{hp}$ stability, every partition resulting from proposed algorithm is $\mathbb{D}_{hp}$-stable. In particular, no coalitions of SBSs in the final partition have the incentive to pursue a different coalition formation through Merge or Split. Next, we investigate whether the proposed algorithm can achieve $\mathbb{D}_{c}$-stability.

A $\mathbb{D}_c$-stable partition has the following properties according to \cite{apt2009generic}. (i) No SBSs are interested in leaving $\mathcal{S}$ to form other collections in $\mathcal{N}$ (through any operation). (ii) A $\mathbb{D}_c$-stable partition is the \emph{unique} outcome of any \emph{arbitrary} iteration of merge-and-split, if it exists. (iii) A $\mathbb{D}_c$-stable partition $\mathcal{S}$ is a unique $\rhd$-maximal partition, i.e., for all partition $\mathcal{S}'\neq\mathcal{S}$, we have $\mathcal{S}\rhd\mathcal{S}^\prime$. Therefore, the $\mathbb{D}_c$-stable partition provides a \emph{Pareto optimal} utility distribution. However, the existence of a $\mathbb{D}_c$ stable partition is not always guaranteed \cite{apt2009generic}. Nevertheless, we can still have the following result.
\begin{theorem}
The proposed distributed SBS coalition formation algorithm converges to the Pareto-optimal $\mathbb{D}_c$-stable partition, if such a partition exists. Otherwise, the final partition is $\mathbb{D}_{hp}$-stable.
\end{theorem}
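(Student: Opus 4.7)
The plan is to split the statement into two cases and lean on the structural results about merge-and-split dynamics cited from \cite{apt2009generic}, verifying that Algorithm 2 fits into that abstract framework. First I would argue termination: every Merge or Split step strictly improves utilities in the Pareto sense for at least one SBS without hurting any other, so once a collection has been visited it cannot reappear; since the number of partitions of the finite set $\mathcal{N}$ is finite, the Repeat loop in Phase 1 must halt in finitely many iterations at some partition $\mathcal{S}_f$. This is the standard convergence guarantee for any Merge-and-Split iteration over a finite player set.

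Next I would handle the ``otherwise'' case to get the easier half. At termination, by construction neither a Merge over any subfamily of coalitions in $\mathcal{S}_f$ nor a Split of any coalition in $\mathcal{S}_f$ Pareto-dominates $\mathcal{S}_f$ (else the loop would have taken that step). Since $\mathbb{D}_{hp}$ only allows defections realizable through such merges and splits, no group of SBSs can profitably deviate via operations in $\mathbb{D}_{hp}(\mathcal{S}_f)$. By the Definition of $\mathbb{D}$-stability instantiated at $\mathbb{D}=\mathbb{D}_{hp}$, $\mathcal{S}_f$ is $\mathbb{D}_{hp}$-stable, which is exactly the second conclusion.

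For the first conclusion, I would invoke the three properties of a $\mathbb{D}_c$-stable partition recalled from \cite{apt2009generic}: (i) no deviation of any kind is profitable, (ii) if a $\mathbb{D}_c$-stable partition exists it is the unique outcome of any arbitrary merge-and-split iteration, and (iii) it is the unique $\rhd$-maximal partition, hence Pareto optimal. Since Algorithm 2 is precisely an arbitrary merge-and-split iteration driven by Pareto dominance, property (ii) forces its terminal partition $\mathcal{S}_f$ to coincide with the (unique) $\mathbb{D}_c$-stable partition whenever one exists; property (iii) then gives Pareto optimality. Combining with the previous paragraph completes the ``if a $\mathbb{D}_c$-stable partition exists\ldots otherwise\ldots'' dichotomy.

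The main obstacle is not algebraic but formal: I must justify that Algorithm 2, as stated, is a legitimate instance of the abstract merge-and-split procedure to which properties (i)--(iii) apply. Concretely, I need to check that the Pareto comparison used in lines 3--4 of the algorithm is exactly the order $\rhd$ employed in the cited framework, that pairwise negotiation over neighboring SBSs suffices to explore every beneficial Merge (otherwise $\mathcal{S}_f$ might be Merge-closed in a weaker sense than required for $\mathbb{D}_{hp}$-stability), and that the post-payment utilities $\phi_i$ defined in \eqref{get_phi} are the quantities being compared under $\rhd$. Once these identifications are made explicit, the rest of the proof is a direct transcription of the cited properties.
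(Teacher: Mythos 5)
Your proposal is correct and follows essentially the same route as the paper: the paper's proof is a one-line invocation of the fact from \cite{apt2009generic} that a $\mathbb{D}_c$-stable partition, when it exists, is the unique outcome of any Merge-and-Split iteration, with the $\mathbb{D}_{hp}$-stability of the terminal partition noted separately (just before the theorem) as an immediate consequence of the definition. Your version simply spells out the termination argument and the identification of Algorithm 2 with the abstract merge-and-split procedure, details the paper leaves implicit.
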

\begin{proof}
The proof is immediate due to the fact that, when it exist, the $\mathbb{D}_c$-stable partition is a unique outcome of any Merge-and-Split iteration \cite{apt2009generic}, such as any partition resulting from our coalition formation algorithm.
\end{proof}

The stability of the grand coalition (e.g. all SBSs form a single coalition) is of particular interest in the coalitional game theory. It can be easily shown that the considered SBS coalitional game is generally not superadditive and its core is generally empty due to the extra offloading cost (transmission delay, cost and collaboration risk) and hence, the grand coalition is not stable. Instead, independent disjoint coalitions will form. Readers who are interested in more details on the stability of grand coalition in coalitional games are referred to \cite{saad2009coalitional,bogomolnaia2002stability}.

\subsection{Complexity Analysis}
The complexity of the proposed coalition formation algorithm lies mainly in the complexity of the Merge and Split operations. For a given network, in one Merge operation, each current coalition attempts to merge with other coalitions in a pairwise manner. In the worst case scenario, every SBS, before finding a suitable merge partner, needs to make a merge attempt with all other SBSs in $\mathcal{N}$. In this case, the first SBS requires $N-1$ attempts for merge, the second requires $N-2$ attempts and so on. The total number of merge attempts in the worst case is thus $\sum^{N}_{i=1}(N-i)$. In practice, the merge process requires a significantly lower number of attempts since finding a suitable partner does not always require to go through all possible merge attempts (once a suitable partner is identified the merge will occur immediately). The complexity is further reduced due to the fact that SBSs do not need to attempt to merge with physically unreachable SBSs. Moreover, after the first run of the algorithm, the initial $N$ non-cooperative SBSs will self-organize into larger coalitions. Subsequent runs of the algorithm will deal with a network composed of a number of coalitions that is much smaller than $N$.

For the split operation, in the worst case scenario, splitting can involve finding all the possible partitions of the set formed by the SBSs in a single coalition. For a given coalition $S$, this number is given by the Bell number $\sum^{|S|}_{k=1} \binom{|S|}{k}$ which grows exponentially with the number of SBSs $|S|$ in the coalition. In practice, this split operation is restricted to the formed coalitions, and thus it will be applied to small sets. The split complexity is further reduced due to the fact that, in most scenarios, a coalition does not need to search for all possible split forms. For instance, once a coalition identifies a suitable split structure, the SBSs in this coalition will split, and the search for further split forms is not needed in the current iteration.

\section{Simulation}\label{sec_simulation}
In this section, we conduct systematic simulations in practical scenarios to evaluate the performance of the proposed system and algorithm.
\subsection{Setup}
Our simulation adopts the widely-used stochastic geometry approach for ultra dense SBS deployment, which is modeled as a homogeneous Poisson Point Process (PPP) \cite{baccelli2010stochastic}. Specifically, we simulate a 100m$\times$200m$\times$50m office building where a set of SBSs are deployed whose locations are chosen according to the PPP with density 0.15. The distribution of MUEs also follows another PPP with density 0.6. The task generation process of each MUE is modeled as a Poisson process with a rate of 5 tasks per time slot. The maximum transmission power of MUEs is set as 10 dBm. The channel model follows the ITU indoor path loss model \cite{series2012propagation}: $L\text{[db]}=20\lg f+10\nu\lg d_u+L_f(n)-28$ (The values and corresponding explanations of parameters are shown in TABLE \ref{para_set}). The target MUE-SBS transmission rate is $r_u=25$ Mbps and hence the corresponding transmission power can be computed. Similar requirements are imposed on the transmissions between SBSs with SBS maximum transmission power 20 dBm and target transmission rate $r_s=50$ Mbps. Figure \ref{SBS_deployment} shows the SBS deployment and MUE association in one operational time slot used in the simulation.
\begin{table}
	\renewcommand\arraystretch{1}
	\centering
	\caption{Simulation setup: system parameters}
	\begin{tabular}{l|c}
		\hline
		
		Parameters & Value\\
		\hline
		\hline
		Maximum UE transmission power $p_m$   & 10 dBm \\
		Maximum SBS transmission power $p_s$  & 20 dBm \\
		UE transmission rate requirement $r_u$ & 25 Mbps\\
		SBS peer transmission rate requirement $r_s$ & 50 Mbps\\
		Frequency (indoor path loss model) $f$ & 900 MHz \\
		Indoor path loss exponent $\nu$ & 3.3\\
		Floor penetration loss $L_f(n)$, $n$=[1,2,3]& [9,12,24] db\\
		Noise power $\sigma^2$ & -126.2 db\\
		SBS density (PPP)   & 0.15\\
		User density (PPP) & 0.6 \\
		Task arrival rate of MUEs $\lambda^a$ & 5\\
		System bandwidth $W$ & 20 MHz\\
		Cloud offloading delay $d^{c,tx}$ & 0.3 sec/task\\
		\hline
	\end{tabular}
\label{para_set}
\end{table}

\begin{figure}[htb]
	\centering	
	\includegraphics[width=3.5 in]{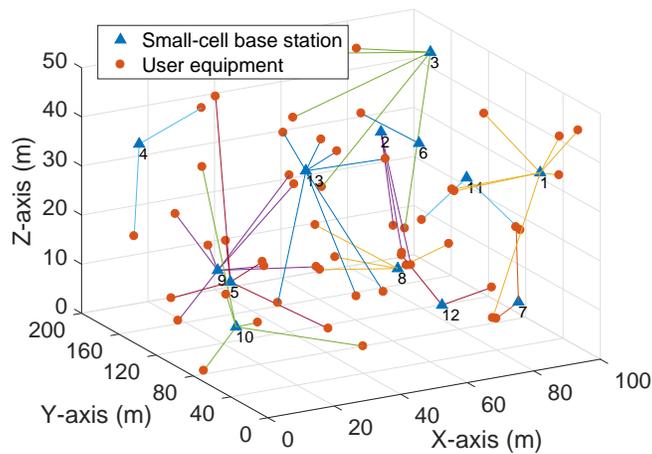}
	\caption{SBS deployment and MUE association}
	\label{SBS_deployment}
\end{figure}

For simplicity, we consider a fixed social trust network throughout the simulation time horizon. Nevertheless, our algorithm is also compatible with evolving trust networks that are updated continuously or periodically according to SBS interactions. The adopted social trust network is illustrated in Figure \ref{social_net}. SBSs that are physically unreachable will not have mutual trust values (e.g. SBS 3 and SBS 10).  SBSs that are physical neighbors may also miss mutual trust due to the lack of recent interactions. In this case, a trust value is obtained by finding the shortest path in the social trust network (e.g. SBS 5 $\rightarrow$ SBS 10 $\rightarrow$ SBS 13).
\begin{figure}[htb]
	\centering	
	\includegraphics[width=3.5 in]{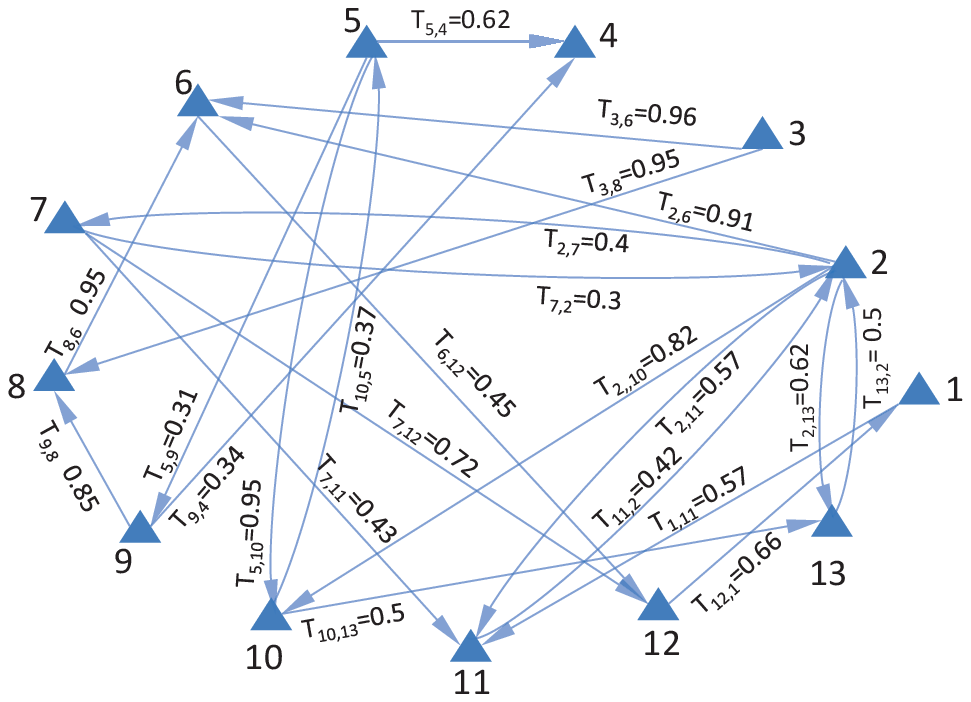}
	\caption{Social trust network}
	\label{social_net}
\end{figure}

The proposed SBS collaboration scheme is compared with three benchmark schemes:
\begin{itemize}
	\item \textbf{Non-cooperative scheme}: Every SBS does not share computation resources with peer SBSs. For overloaded SBSs, all unsatisfied computation tasks will be offloaded to the cloud server.
	\item \textbf{Cloud-offloading minimization}: The scheme greedily exploits the computation resources of peer SBSs to minimize the number of tasks offloaded to the cloud server. Therefore, peer offloading is performed whenever possible regardless of the offloading and risk management costs.
	\item \textbf{Centralized collaborative scheme}: This scheme uses brutal force to search for the best coalition structure that minimizes the total system cost without considering the coalition stability or SBS incentives.
\end{itemize}

\subsection{Coalition formation in one time slot}

\begin{figure}[htb]
	\centering	
	\includegraphics[width=3.2 in]{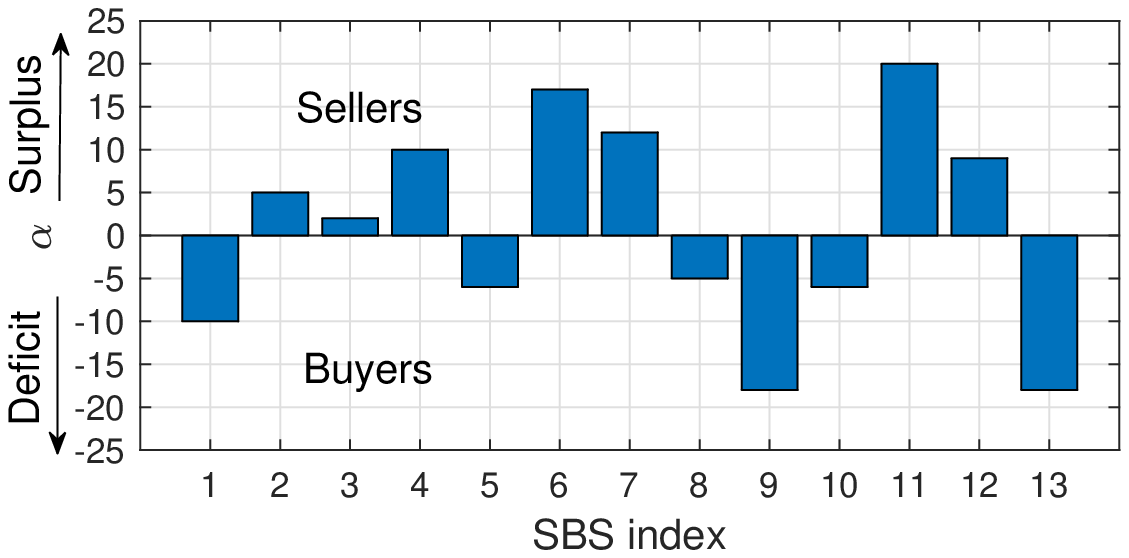}
	\caption{Computation resource surplus/deficit}
	\label{alpha}
\end{figure}


We exemplify the distributed coalition formation process by considering the computation surplus/deficit profile shown in Figure \ref{alpha} in one operational time slot. Figure \ref{phi_evo} depicts the evolution of the intermediate post-payment utilities $\bm{\phi}$ during the coalition formation process, where weights in the cost function are set as $w_c=0.2, w_r=0.2, w_0=1$. The horizontal axis represents the iterations in the Merge-and-Split process and for each iteration, we indicate if it is a Merge operation or a Split operation. Since all SBSs have their own computation tasks to process, they incur positive costs (hence negative utilities). In particular, SBSs 1, 5, 8, 9, 10, 13 have computation resources deficits. Without SBS collaboration, they have to offload the extra computation tasks to the remote cloud, thereby incurring large transmission delay and cloud payment costs. During the coalition formation, each iteration is executed by following the Merge/Split operation that aims to find a Pareto-dominant coalition partition than the current partition. Therefore, after each iteration, at least one of the SBSs improves its utilities without decreasing the utilities of other SBSs. Figure \ref{total_utility} shows the system utility evolution of each specific Merge or Split operation. We see that the system utility is improved with every Merge/Split operation and after only several iterations, the network converges to a stable partition of coalitions. This indicates that in practice, the complexity of running the proposed algorithm is low and hence, it can be easily implemented.

\begin{figure}[htb]
	\centering	
	\includegraphics[width=3.5 in]{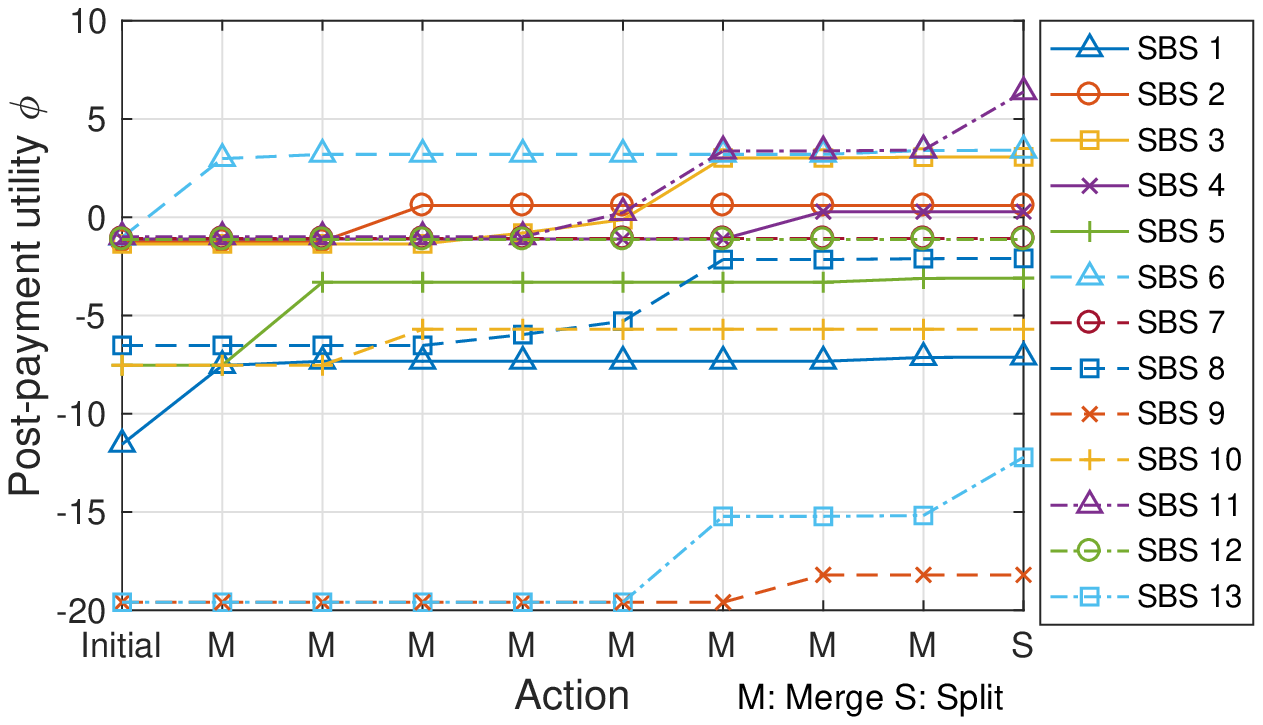}
	\caption{Evolution of post-payment utility $\bm{\phi}$}
	\label{phi_evo}
\end{figure}

\begin{figure}[htb]
	\centering	
	\includegraphics[width=3.5 in]{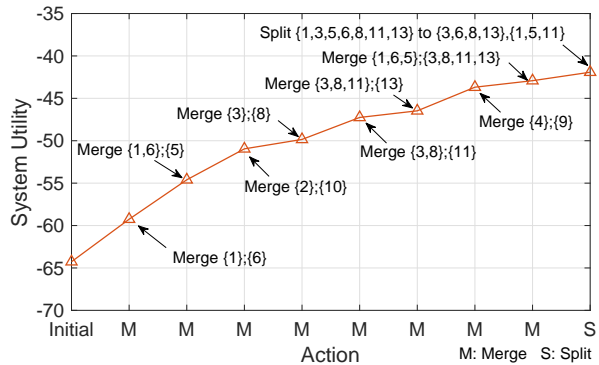}
	\caption{System utility evolution and coalition formation}
	\label{total_utility}
\end{figure}

The final coalitions and the number of exchanged computation tasks are presented in Figure \ref{coalition_offloadings}, where buyers are matched with nearby sellers. Several points are worth noting. First, a coalition can contain multiple buyers and multiple sellers and is not necessarily just a matching between one buyer and one seller. For example, the coalition $\{4, 9\}$ involves only one buyer SBS 9 and seller SBS 4, whereas the coalition $\{3, 6, 8, 13\}$ involves two buyer SBSs 8, 13 and two seller SBSs 3, 6. In particular, SBS 6 is sharing its computation resources with both SBSs 8 and 13. Second, an SBS may not want to join any coalition. In other words, an SBS may want to form an isolated coalition that contains only itself. In this particular simulation, we observe that SBS 7 and SBS 12 separately form isolated coalitions. As a result, the utilities of these two SBSs stay the same before and after the coalition formation. This can also be observed in Figure \ref{phi_evo}.

By reading both Figure \ref{alpha} and Figure \ref{coalition_offloadings}, we can see that not all computation demands of buyers can be satisfied via SBS coalition. For instance, consider SBS 9, it has a computation resource deficit for 18 tasks, yet it can only offload 10 tasks to SBS 4, which is the only matched seller SBS. In this case, the remaining 8 unsatisfied tasks will be offloaded to the cloud. Figure \ref{payment_reward} further shows the computation resource deficit (or surplus) and the actual computation resource bought (or sold) for each SBS. Clearly, the deficit (or surplus) serves an upper bound on what is actually bought (or sold) and hence, the magnitude of the red bar is always larger than that of the corresponding blue bar. In particular, SBSs 7 and 12 do not sell any of their computation resource surplus to other SBSs. In the same figure, we also show the payments made by buyer SBSs and the rewards received by seller SBSs. It can be verified that the total payment equals the total reward within each coalition. For instance, in the coalition $\{3, 6, 8, 13\}$, the total payment is 0.19 + 8.30 made by buyer SBSs 8, 13 and the total reward is 4.13 + 4. 36 received by seller SBSs 3, 6.

\begin{figure}[htb]
	\centering	
	\includegraphics[width=3.5 in]{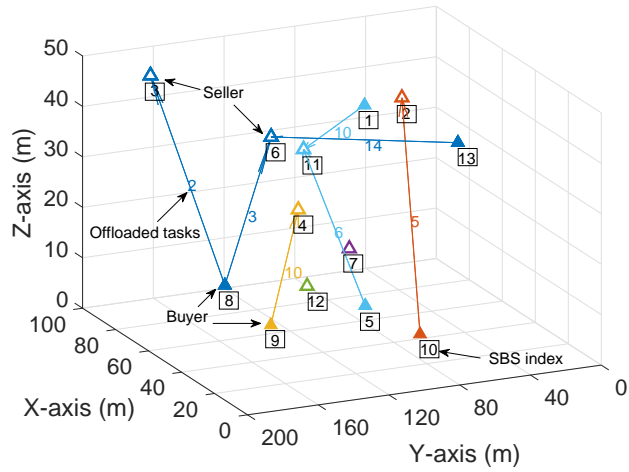}
	\caption{Coalitions and computation peer offloading}
	\label{coalition_offloadings}
\end{figure}

\begin{figure}[htb]
	\centering	
	\includegraphics[width=3.5 in]{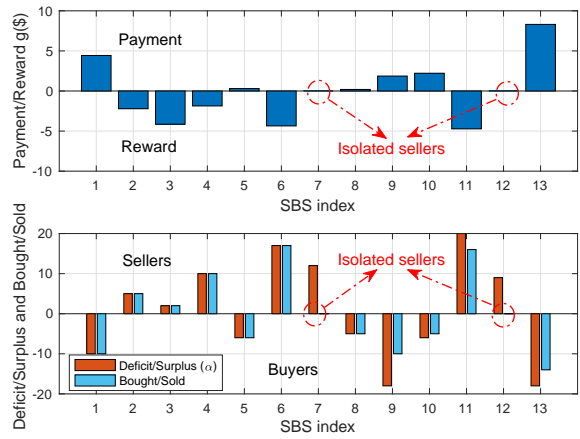}
	\caption{Payment/reward and bought/sold of SBSs}
	\label{payment_reward}
\end{figure}

We note that what coalitions will be formed depends on the adopted cost function. In this set of simulations, we investigate the impact of the relative weights of offloading cost $w_c$, risk management cost $w_r$ and payment to the cloud $w_0$. Table \ref{weight_and_coalition} reports the formed SBS coalitions for different weight profiles. We can see that, although in our scenario the grand coalition is difficult to form due to the physical network constraints, larger coalitions tend to form with smaller weights assigned to the peer offloading cost and the risk management cost. Figure \ref{impact_of_weight} presents the number of formed coalitions and the average coalition size.

\begin{table}
	\renewcommand\arraystretch{1}
	\centering
	\caption{Weights and coalition structure}
\begin{tabular}{ll}
		\hline
		[$w_c, w_r, w_0$] & SBS Coalition Structure\\
		\hline
		$[0.1,0.1,1.0]$ & $\left\{[1,2,10,11];[3,4,5,6,8,9];[7,12,13]\right\}$ \\
		$[0.1,1.0,1.0]$ & $\left\{[1,4,6,7];[2,5,8];[7,13,10,11];[3];[12]\right\}$ \\
		$[1.5,0.1,1.0]$ & $\left\{[1,11];[4,5];[6,8,9];[2];[3];[7];[10];[12];[13]\right\}$\\
		$[1.5,0.5,1.0]$ & $\{[1,11];[4,5];[6,9];\&~\text{the rest isolated SBSs}\}$\\
	 	$[1.0,1.0,1.0]$ & $\{[1,6,8];[4,5];[11,13];\&~\text{the rest isolated SBSs}\}$\\
		$[1.5,1.0,1.0]$ & $\{[1,11];[6,9];\&~\text{the rest isolated SBSs}\}$\\
		\hline
	\end{tabular}
\label{weight_and_coalition}
\end{table}

\begin{figure}[htb]
	\centering	
	\hspace{-0.1 in}
	\subfigure[Number of coalitions]{\label{num_coalition}
		\includegraphics[width=2.7 in]{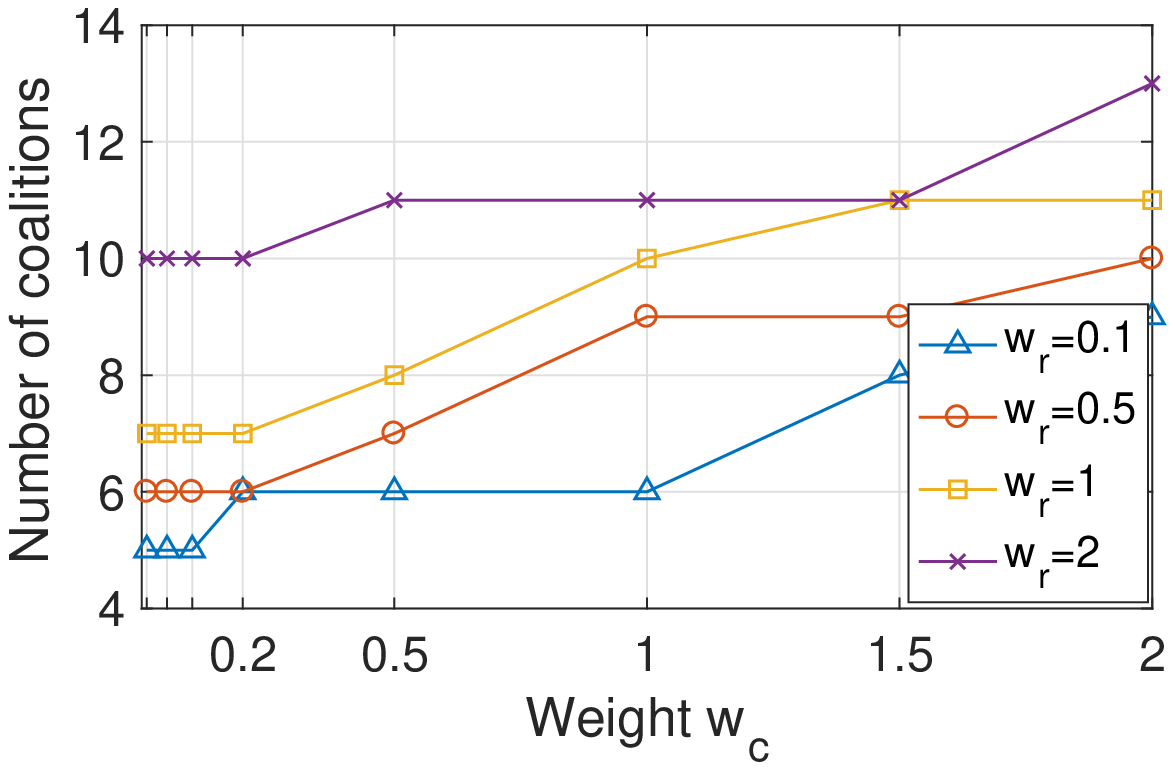}}
	\hspace{0.15 in}
	\subfigure[Average coalition size]{\label{max_coalition_size}
		\includegraphics[width=2.7 in]{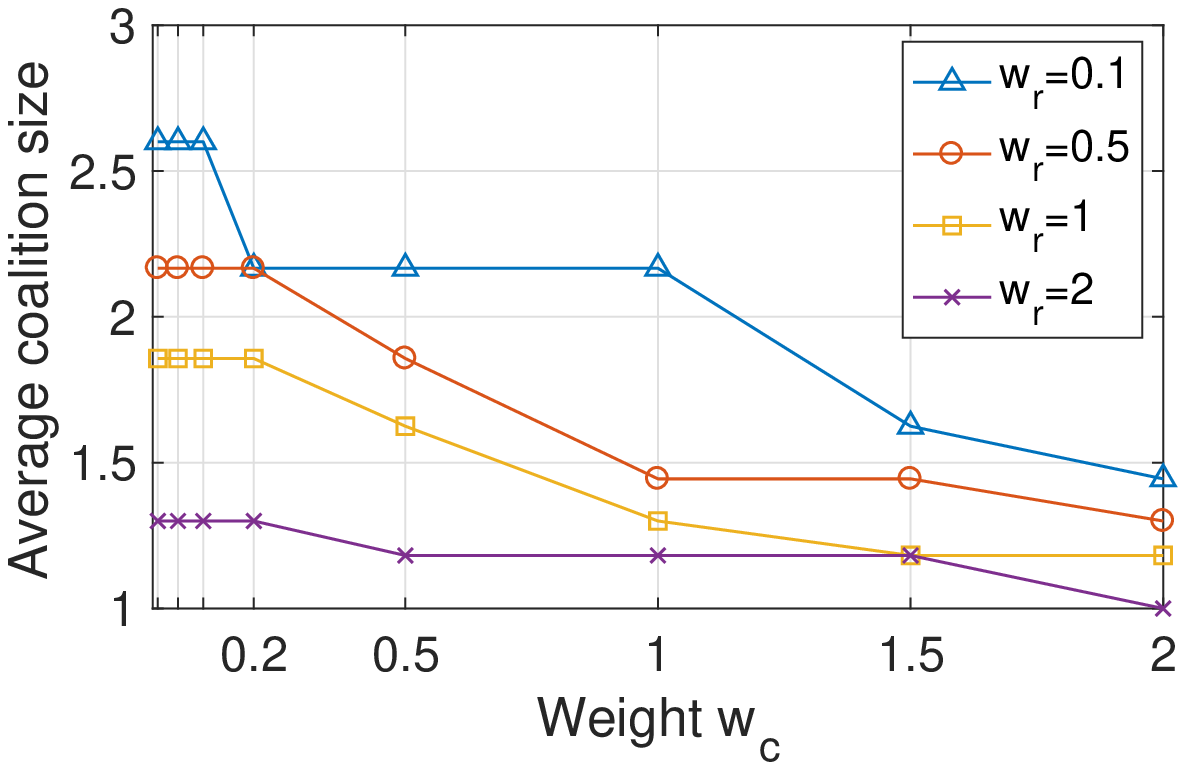}}
	\caption{Impact of weights on coalition formation ($w_0=1$).}
	\label{impact_of_weight}
\end{figure}

\subsection{Performance evaluation and comparison}
Figure \ref{costs_CF} shows the cost incurred by different parts of the system and offers a comparison with the three benchmarks. (1) The non-cooperative scheme incurs the highest total cost. Since there is no collaboration among the SBSs, each SBS has to offload its extra computation tasks to the remote cloud, thereby incurring a large transmission delay and cloud payment cost. However, since there is no task offloading between SBSs, its operational cost is lowered and totally avoids the risk management cost since all tasks are processed locally or in the secure cloud. (2) The purpose of the cloud-offloading minimization scheme is to minimize the number of tasks offloaded to the remote cloud, thereby reducing the transmission delay and cloud payment cost. As can be seen, the cost due to using cloud service is significantly reduced. However, since the optimization ignores the operation cost due to offloading among SBSs and especially the risk management cost, the total cost is still high. (3) The centralized collaborative scheme minimizes the overall network cost by jointly considering all cost sources, thereby achieving a much lower total cost. However, the centralized scheme is performed under the assumption that all SBSs are cooperative with no incentive issues, which is not suitable for our considered problem where SBSs are self-interested. (4) Our proposed solution based on the coalitional game gracefully addresses the incentive challenge. As can be seen, the proposed algorithm achieves similar performance to the centralized scheme while ensuring that the adopted coalitions are stable.

\begin{figure}[tb]
	\centering	
	\includegraphics[width=3.5 in]{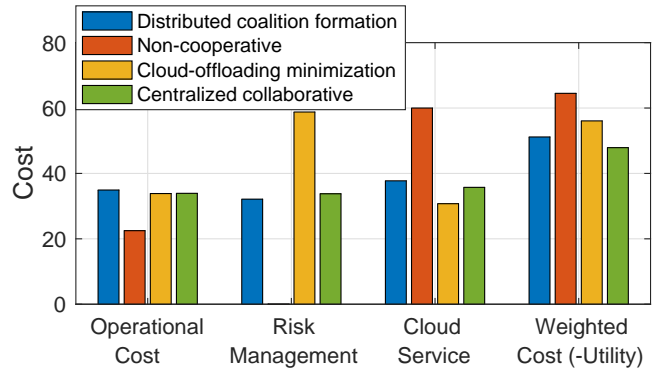}
	\caption{Costs characterization}
	\label{costs_CF}
\end{figure}

\begin{figure}[tb]
	\centering	
	\includegraphics[width=3.5 in]{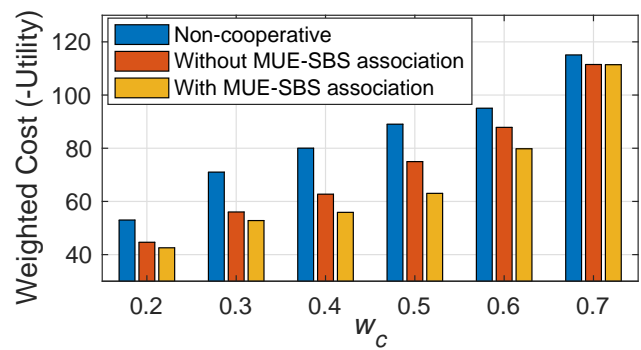}
	\caption{Advantage of MUE-SBS association scheme}
	\label{mue_assoc}
\end{figure}

\begin{figure}[tb]
	\centering	
	\includegraphics[width=3.5 in]{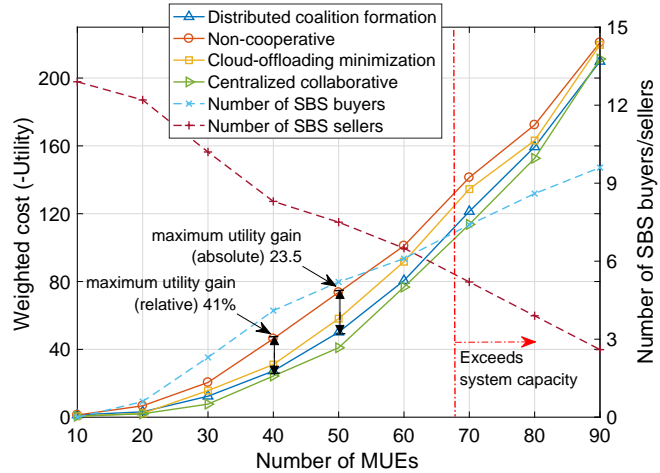}
	\caption{Impact of MUE number}
	\label{user_num}
\end{figure}

\begin{figure}[tb]
	\centering	
      \includegraphics[width=3.5 in]{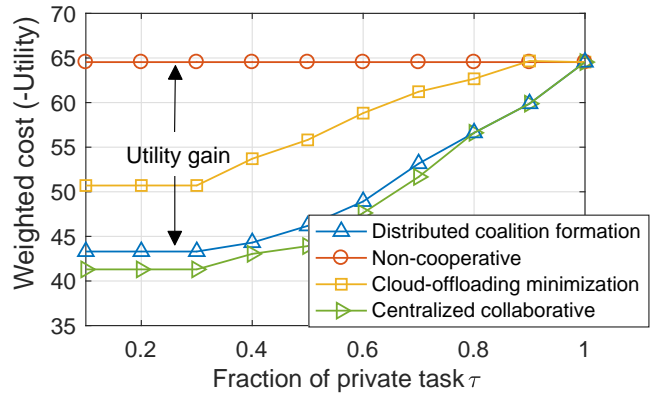}
	  \caption{Impact of private fraction $\tau$}
	  \label{private_fraction}
\end{figure}

\begin{figure*}[htb]
    \begin{minipage}[t]{0.5\linewidth}
	   \includegraphics[width=3 in]{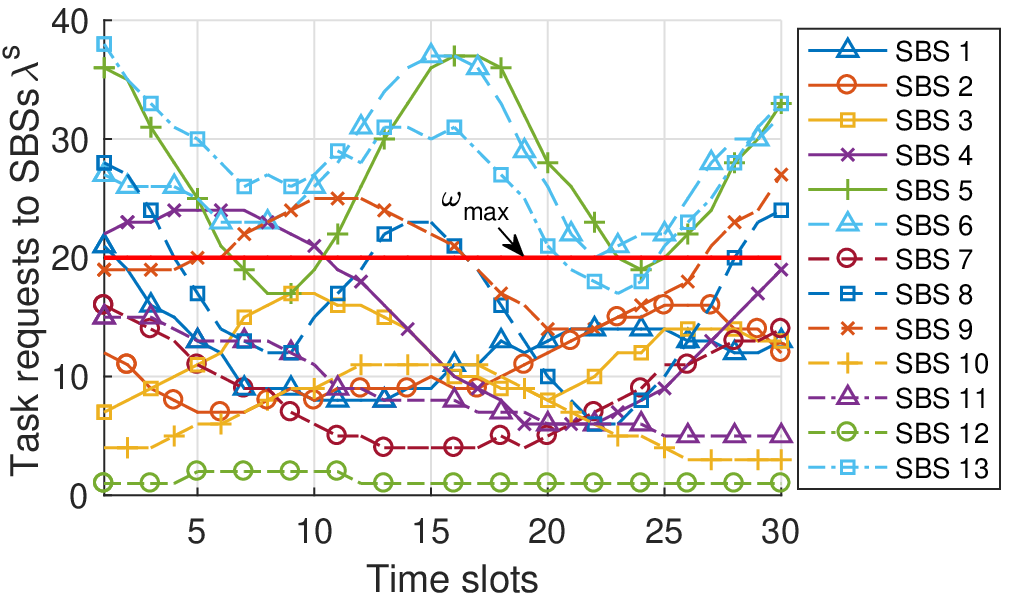}
	   \caption{Dynamics of task arrival}
	   \label{dynamic_lambda_s}
     \end{minipage}%
    \begin{minipage}[t]{0.5\linewidth}
	   \includegraphics[width=3 in]{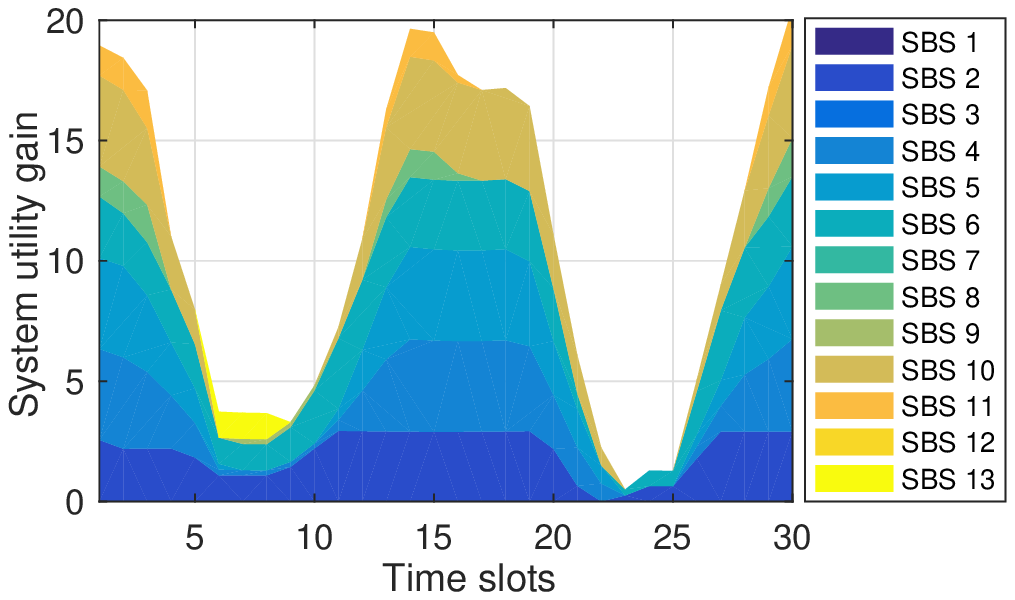}
	   \caption{Dynamics of utility gain}
	   \label{dynamic_utility_gain}
    \end{minipage}%
\end{figure*}

In Figure \ref{mue_assoc}, we demonstrate the benefits achieved by introducing the workload balancing in MUE-SBS association, which is a distinct feature of ultra dense SBS networks and that existing work \cite{tanzil2016distributed} did not consider. We run the coalition formation algorithm under two settings: with and without MUE-SBS association scheme. As can be observed, the MUE-SBS association scheme helps to reduce the system cost and the cost reduction depends on the value of $w_c$. When $w_c$ is small, the transmission delay and energy consumption saved by MUE-SBS association scheme are less valued, therefore, a modest cost reduction is realized. As the $w_c$ increases, the edge system benefits more by adopting the MUE-SBS association scheme. However, as shown previously, a too large $w_c$ discourages formation of SBS coalitions which restrains the role of MUE-SBS association. Hence, the benefit of MUE-SBS association scheme diminishes when the $w_c$ grows larger than 0.6.

How much performance improvement (i.e. cost reduction) can be achieved by SBS coalition will depend on the spatial traffic pattern in the network. Intuitively, if all SBSs have a light workload, then there is no need for collaboration, whereas if all SBSs are overloaded, then it is not possible to collaborate. Now, we investigate the performance of collaborative edge computing as a function of the system utilization level (i.e. ratio of expected task number in the network to system computation capacity $\sum_{i\in\mathcal{N}}\omega_i^{\max}$). We simulate a spectrum of the system utilization level by changing the number of MUEs in the network. Figure \ref{user_num} depicts the impact of the number of MUEs on the collaborative performance in terms of total system cost. When there are only a few MUEs, most of the SBSs can use its own computation resource to satisfy the computation requests. The system is thus a ``buyers' market''. When there is an excessive number of MUEs, most of the SBSs need extra computation resources. The system is thus a ``sellers' market''. In both cases, coalitions are difficult to form. The maximum utility gain by collaboration occurs when the system computation capacity matches the number of MUEs. By allowing collaboration among SBSs, the spatial workload intensity heterogeneity is mitigated via workload balancing. In our simulations, the maximum absolute utility gain is achieved at around 50 MUEs and the maximum relative utility gain (41\%) is achieved at around 40 MUEs.

The performance improvement by SBS coalition also depends on how flexibly computation tasks can be offloaded. Recall that computation tasks can be either normal tasks or private tasks that must be processed locally or offloaded to the remote cloud. Figure \ref{private_fraction} shows the impact of the fraction of the private tasks among the total tasks. As can be seen, a larger utility gain is achieved at a lower fraction of private tasks because peer offloading among SBSs is more flexible. Again the proposed coalition formation algorithm significantly reduces the total system cost and achieves close-to-optimal performance.

In order to evaluate the performance of proposed algorithm over multiple operational slots, we further simulate a dynamic edge system with random MUE arrival and computation task arrival; the social trust network is assumed to be observed and updated every 5 slots. The total task requests received by SBSs are given in Figure \ref{dynamic_lambda_s}. With the temporally heterogeneous computation task requests, the SBSs intermittently switch between ``buyer'' and ``seller'' modes. Figure \ref{dynamic_utility_gain} shows the corresponding system utility gain across 30 operational slots, where each color block denotes the post-payment utility of an SBS. It can be observed that the system utility gain is mainly decided by the buyer-seller composition in the edge system. For example in time slot 23, there is only one ``buyer'' (SBS 6) in the system and all other SBSs are in the ``seller'' mode. In this case, only one coalition \{2,6\} is formed and a low system utility gain is realized. By contrast, in time slot 15, a more balanced composition of ``seller'' and ``buyers'' is presented, where buyers have a large demand for computation resource which can be potentially satisfied by the sellers. This motivates the formation of SBS coalitions and hence results in larger system utility gain.

\section{Related work}\label{sec_related_work}
Resource allocation in small cell networks has been widely studied in the literature. For instance, belief propagation method is applied to manage interference problems in the femtocell network\cite{rangan2012belief}. Orthogonal bandwidth allocation for femtocells is investigated using fractional frequency reuse \cite{lee2011interference}. However, these works only manage communication resources and do not consider the computing capabilities of SBSs and how SBSs can form a pool of computational resources while providing radio access service. Computational resource sharing is the main concern of geographical load balancing techniques originally proposed for data centers to deal with spatial diversities of workload patterns \cite{lin2012online} and electricity prices \cite{lou2015spatio}. However, conventional clouds manage only computational resources without considering the radio access. In edge systems, the provisioning of radio access between the SBSs and end users is a critical design aspect that has a significant impact on the system performance. Recently, with SBSs being considered as a major form of edge devices in the new edge computing paradigm, increasingly many works start to look at the joint optimization of computation and communication resources of SBSs \cite{munoz2015optimization, barbarossa2014computation}. However, most of these works assume that the coalitions among the SBSs have already been determined and focus on the offloading decisions of MUEs. In \cite{queis2015fog, queis2014smallcell}, computation clusters of SBSs are optimally built by performing joint allocation of radio and computation resources, focusing on reducing the power consumption and processing complexity. However, the incentive issue of SBSs is not considered.

Since SBSs are typically developed by individual owners, how to provide SBSs with incentives to collaborate is key to improving the overall system performance. In \cite{chen2012utility,yun2012economic,langar2015operations}, incentive mechanisms based on game theoretic methods are designed to motivate femtocells to adopt open/hybrid radio access mode, which does not consider the cooperative edge computing.  In \cite{jin2016auction}, a double auction mechanism is developed for matching user computation request to a set of cloudlets. These works consider incentives of SBSs to provide services to end users. Our paper studies incentives of SBSs to collaborate among each other.

The coalitional game theory is a powerful tool for studying user cooperation problems with individual incentive issues, which has been widely adopted in various problems, e.g. interference management \cite{zhang2014coalitional,pantisano2013interference}, spectrum sensing \cite{saad2011spectrum} and smart grids \cite{saad2011coalotional,lee2014direct}. Recently, it is also introduced in \cite{tanzil2016distributed} to the collaborative edge computing system for forming Femto-clouds among individual femtocells. Our considered system differs from this paper in the following aspects. First, we consider an ultra dense deployment scenario. Therefore, collaboration does not only occur on the SBS peer offloading level but also on the MUE-SBS association level. Second, we consider socially-trusted collaboration by introducing the social trust network. This allows risk management between any pair of SBSs.

\section{Conclusion}\label{sec_conclusion}
In this paper, we investigated collaborative edge computing in a densely-deployed small cell network, where the SBSs are incentivized to form coalitions and cooperate with SBS peers to increase system utility. Within the coalitions, buyers and sellers are allowed to cooperate at both MUE-SBS association stage and SBS peer offloading stage by exploiting the dense deployment of SBSs. We developed a distributed coalition formation algorithm based on Merge-and-Split rules. The proposed algorithm jointly considers SBS operational cost, cloud service fees, and potential security risks during coalition formation and guarantees the stability of coalitions by following a payment-based incentive mechanism. Our simulation results show that the collaborative edge system can dramatically reduce the system cost by more than 40\%. Our study not only provides new guidelines for cooperation among densely deployed edge devices but also delivers important insights for social trust and security issues in edge cooperation, which deserves more future investigation in collaborative edge computing systems.

\bibliographystyle{IEEEtran}
\bibliography{refs}

\end{document}